\DeclareMathOperator{\dist}{dist}
\newcommand{\E}{\mathbb{E}}
\newcommand{\canon}{{canon}}
\tiny\color{gray},
\begin{document}

\title{Expected Cost of Greedy Online Facility Assignment on Regular Polygons}

\titlerunning{Expected Cost Analysis of OFA on regular polygons}
\author{
    Md. Rawha Siddiqi Riad \and
    Md. Tanzeem Rahat \and
    Md. Manzurul Hasan
}
\institute{
    Department of Computer Science and Engineering,\\
    American International University-Bangladesh,\\
    Dhaka 1229, Bangladesh\\
    \email{22-46678-1@student.aiub.edu}, \email{tanzeem.rahat@aiub.edu} \email{manzurul@aiub.edu}
}
\authorrunning{Riad, M.R.S \and Rahat, M.T. \and Hasan, M.M.}
\maketitle

\begin{abstract}
We study a greedy online facility assignment process on a regular $n$-gon, where unit-capacity facilities occupy the vertices and customers arrive sequentially at uniformly random locations on polygon edges. 
Each arrival is irrevocably assigned to the nearest currently free facility under the shortest edge-walk metric, with uniform tie-breaking among equidistant choices. 
Our main theoretical result is an exact value-function characterization: for every occupancy state $S\subseteq V$, the expected remaining cost $V(S)$ satisfies a finite-horizon integral recurrence obtained by conditioning on the random arrival edge and position. 
To make this recurrence computationally effective, we exploit dihedral symmetry of the regular polygon and show that $V(S)$ is invariant under rotations and reflections, enabling canonicalization and symmetry-reduced dynamic programming. 
For small $n$, we evaluate the recurrence accurately using deterministic numerical integration over piecewise-linear distance regions,; for larger $n$, we estimate the expected total cost via direct Monte Carlo simulation of the online process and report $95\%$ confidence intervals. 
Our computations validate the recurrence (including a closed-form check for the square, $n=4$) and indicate that the total expected cost increases with $n$, while the per-customer expected travel distance grows gradually as remaining free vertices become farther on average.
\keywords{Online algorithms \and Facility assignment \and Expected cost \and Regular polygons \and Symmetry reduction \and Monte Carlo}
\end{abstract}

\section{Introduction}
\label{sec:intro}

Many modern service systems must allocate limited resources to requests that arrive over time without
advance knowledge. Examples include dispatching drivers to riders, assigning delivery jobs to vehicles,
and routing tasks to edge servers. A classical modeling approach is \emph{online facility location}, where an
algorithm may decide to open new facilities and connect demands online~\cite{meyerson2001online,fotakis2011online}.
In contrast, in many operational settings facility locations are fixed (e.g., depots, stations, servers) and
each facility has a hard capacity. This motivates the \emph{online facility assignment} (OFA) model, where each
arrival must be matched immediately to an available facility and the main source of difficulty is that early
assignments reduce future options~\cite{ahmed2020online}.

This paper studies OFA in a structured geometric environment: a regular $n$-gon with facilities at vertices.
Regular polygons provide a clean metric space where distances are defined by shortest walks along polygon edges,
yet the online coupling remains nontrivial: a greedy decision that is locally optimal can alter the geometry of
future availability. The polygon setting is also rich enough to capture symmetry and boundary effects that do not
appear on a line, and it has been the focus of recent worst-case (competitive) analyses~\cite{malik2025online}.
Our goal is complementary: we analyze the \emph{expected} performance of the greedy assignment rule under a fully
specified random arrival model.

Let $n\ge 3$ and consider a regular $n$-gon with vertex set $V=\{0,1,\dots,n-1\}$ and edge set
$E=\{(i,(i+1)\bmod n): i=0,\dots,n-1\}$. Each vertex hosts a unit-capacity facility (one customer per vertex).
Customers arrive sequentially. Each arrival independently selects an edge $e=(i,j)\in E$ uniformly at random,
then selects a position $t\in[0,1]$ uniformly along that edge, where $t=0$ corresponds to endpoint $i$ and $t=1$
to endpoint $j$.
The distance from an arrival $(e,t)$ to a vertex $v$ is the shortest \emph{edge-walk} distance:
\begin{align*}
\text{dist}(e, t, v) &= \min\big(t + \delta(i, v), (1 - t) + \delta(j, v)\big), \\
\delta(u, v) &= \min(|u - v|, n - |u - v|).
\end{align*}
Upon arrival, the customer must be assigned irrevocably to a currently free facility minimizing $\dist(e,t,v)$,;
ties are broken uniformly at random among minimizers. The process terminates after exactly $n$ arrivals when all
facilities are occupied. The performance measure is the \emph{total assignment cost}, defined as the sum of travel
distances of all customers. W; we study its expectation over arrival randomness and tie-breaking.

Our contributions are designed to support both theoretical understanding and reproducible computation:

\begin{itemize}
\item \emph{Exact expected-cost recurrence (value-function formulation).}
We derive an exact finite-horizon integral recurrence for the value function $V(S)$, the expected remaining cost
from any occupancy state $S\subseteq V$. The recurrence conditions on the random arrival edge and position and
expresses $V(S)$ as immediate cost plus expected future cost over the next occupied vertex.

\item \emph{Symmetry reduction via dihedral invariance.}
We prove that $V(S)$ is invariant under rotations and reflections of the regular $n$-gon (dihedral group action).
This yields a practical algorithmic improvement: we canonicalize each state to a representative under dihedral
symmetry and run a symmetry-reduced dynamic program, substantially decreasing the number of distinct states to
evaluate.

\item \emph{Two-regime computation with quantified uncertainty.}
For small $n$, we evaluate the integral recurrence accurately using deterministic numerical integration over
piecewise-linear distance regions, obtaining high-precision values without Monte Carlo noise.
For larger $n$, we switch to direct Monte Carlo simulation of the entire online process and report $95\%$ confidence
intervals, enabling scalable estimation.

\item \emph{Analytic validation for $n=4$.}
As a sanity check and a guide for reviewers, we provide a complete closed-form computation for the square
($n=4$), which matches the values produced by our numerical methods.
\end{itemize}

The rest of the paper is organized as follows:
Section~\ref{sec:related} reviews related work in online facility location and online matching.
Section~\ref{sec:model} formalizes the polygon model and distance metric.
Section~\ref{sec:recurrence} derives the expected-cost recurrence for the value function.
Section~\ref{sec:symmetry} proves dihedral invariance and describes state canonicalization.
Section~\ref{sec:algorithms} presents computational methods for small and large $n$.
Section~\ref{sec:exp} reports experimental results and confidence intervals, and
Section~\ref{sec:conc} concludes with directions for future work.

\section{Related Work}\label{sec:related}
Our setting lies between online facility problems and online matching, with a key shift in viewpoint:
most prior work emphasizes worst-case (competitive) guarantees, while we analyze \emph{expected} cost under an
explicit random-arrival model.

Online facility location allows opening facilities over time and has been studied extensively via competitive
analysis, starting with Meyerson~\cite{meyerson2001online},; Fotakis~\cite{fotakis2011online} surveys subsequent
variants and refinements. In contrast, our model has \emph{fixed} facility locations and strict unit capacities,
so the algorithm only assigns each arrival to an available facility.

Ahmed et al.~\cite{ahmed2020online} formalized OFA on graph metrics and analyzed greedy policies in the worst
case. Malik et al.~\cite{malik2025online} study OFA specifically on regular polygons, again focusing on competitive
bounds. Our work complements these by computing average-case performance on polygons under uniform random arrivals.

OFA can be viewed as an online matching problem between arrivals and fixed resources. The classical adversarial
model is captured by Karp--Vazirani--Vazirani~\cite{karp1990optimal}. Random-input models often admit stronger
performance, as explored by Goel and Mehta~\cite{goel2008online}. Weighted online matching highlights additional
difficulties in worst-case settings~\cite{kalyanasundaram1993online},; Mehta~\cite{mehta2013online} surveys these
connections to resource allocation and ad allocation.

To evaluate expectations involving continuous arrivals and tie-breaking, we use sampling-based estimation rooted
in the Monte Carlo method of Metropolis and Ulam~\cite{metropolis1949monte},; Caflisch~\cite{caflisch1998monte}
surveys modern Monte Carlo and quasi-Monte Carlo techniques.

\smallskip
\noindent We provide an exact expected-cost recurrence for greedy OFA on regular polygons and
make it computationally practical via symmetry reduction and regime-appropriate numerical methods.

\section{Model and Preliminaries}\label{sec:model}

\subsection{Polygon, vertices, edges, and arrivals}\label{sec:model_arrivals}
Fix an integer $n\ge 3$. We consider a regular $n$-gon whose vertices host unit-capacity facilities.
Formally, the vertex set is
\[
V=\{0,1,\dots,n-1\},
\]
and the (undirected) edge set is the cycle
\[
E=\{(i,(i+1)\bmod n): i=0,1,\dots,n-1\}.
\]
Each vertex $v\in V$ contains a facility that can serve exactly one customer.; Tthus after a vertex is used once,
it becomes unavailable.

Customers arrive sequentially. Each arrival independently:
(i) chooses an edge $e\in E$ uniformly at random (probability $1/n$), and then
(ii) chooses a position $t\in[0,1]$ uniformly on that edge.
Throughout, we use the following \emph{orientation convention}:

\begin{quote}
If $e=(i,j)$, then $t=0$ corresponds to endpoint $i$ and $t=1$ corresponds to endpoint $j$.
Hence the distance along the edge from the arrival point to $i$ is $t$, and to $j$ is $1-t$.
\end{quote}

\subsection{Edge-walk distance}\label{sec:model_distance}
Distances are measured along polygon edges (i.e., in the cycle graph metric).
For vertices $u,v\in V$, define the (edge-count) cycle distance
\[
\delta(u,v)=\min\big(|u-v|,\; n-|u-v|\big).
\]
For an arrival at position $t$ on edge $e=(i,j)$ and a facility at vertex $v$, the shortest \emph{edge-walk}
distance is
\begin{equation}\label{eq:dist_def}
\dist(e,t,v)=\min\Big(t+\delta(i,v),\; (1-t)+\delta(j,v)\Big).
\end{equation}
The two terms correspond to reaching $v$ by first moving to endpoint $i$ or $j$, respectively, and then walking
along the polygon edges.

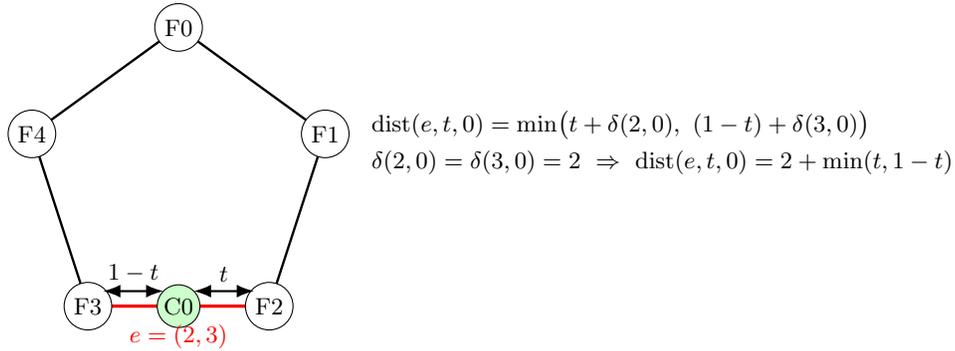
\begin{figure}[t]
\centering
\begin{tikzpicture}[
  >=Latex,
  facility/.style={circle, draw=black, fill=white, inner sep=1.6pt, minimum size=18pt, font=\small},
  customer/.style={circle, draw=black, fill=green!20, inner sep=1.2pt, minimum size=16pt, font=\small},
  edgeline/.style={line width=0.9pt},
  hiedge/.style={edgeline, red, line width=1.2pt},
  lab/.style={font=\small}
]
\def\R{2.05cm}
\def\tpos{0.50}

% Regular pentagon vertices (clockwise labels 0..4)
\coordinate (P0) at (90:\R);
\coordinate (P1) at (18:\R);
\coordinate (P2) at (-54:\R);
\coordinate (P3) at (-126:\R);
\coordinate (P4) at (162:\R);

% Polygon
\draw[edgeline] (P0)--(P1)--(P2)--(P3)--(P4)--cycle;

% Highlight edge e=(2,3)
\draw[hiedge] (P2)--(P3);

% Facility nodes
\node[facility] (F0) at (P0) {F0};
\node[facility] (F1) at (P1) {F1};
\node[facility] (F2) at (P2) {F2};
\node[facility] (F3) at (P3) {F3};
\node[facility] (F4) at (P4) {F4};

% Customer on edge (2,3): t from F2 toward F3
\coordinate (C) at ($(P2)!\tpos!(P3)$);
\node[customer] (C0) at (C) {C0};

\node[lab, red] at ($(P2)!0.5!(P3) + (0,-0.4cm)$) {$e=(2,3)$};

% Segment labels: t and 1-t (correct orientation)
\draw[<->, line width=0.8pt] ($(P2)+(-0.2cm,0.2cm)$) -- ($(C)+(0.2cm,0.2cm)$)
  node[midway, above, lab] {$t$};
\draw[<->, line width=0.8pt] ($(C)+(-0.2cm,0.2cm)$) -- ($(P3)+(0.2cm,0.2cm)$)
  node[midway, above, lab] {$1-t$};

% Right-side formula (kept compact)
\begin{scope}[xshift=3.45cm, yshift=0.2cm]
  \node[lab, anchor=west] at (-1,0.55)
  {$\dist(e,t,0)=\min\!\big(t+\delta(2,0),\ (1-t)+\delta(3,0)\big)$};
  \node[lab, anchor=west] at (-1,0.05) {$\delta(2,0)=\delta(3,0)=2\ \Rightarrow\ \dist(e,t,0)=2+\min(t,1-t)$};
\end{scope}
\end{tikzpicture}
\caption{Edge-walk distance illustration with the orientation convention: for $e=(2,3)$, the arrival is at distance $t$
from vertex $2$ and $1-t$ from vertex $3$. (Here F representes facilities and C represents customer.)}
\label{fig:dist_example}
\end{figure}

\subsection{Online assignment rule}\label{sec:model_rule}
At any time, let $S\subseteq V$ denote the set of already occupied vertices (facilities).
When a customer arrives at $(e,t)$, the algorithm assigns it to a free vertex
\[
v^\star \in \arg\min_{v\in V\setminus S}\dist(e,t,v).
\]
If the argmin contains multiple vertices (equal distance), the algorithm breaks ties uniformly at random among
the minimizers. The chosen vertex is then marked occupied, i.e., the state updates to $S\leftarrow S\cup\{v^\star\}$.
The process stops after exactly $n$ arrivals when $S=V$.
Our objective is to analyze the \emph{expected total assignment cost}, the sum of the travel distances incurred
over all arrivals.

\section{Expected-Cost Recurrence}\label{sec:recurrence}

\subsection{Value function}\label{sec:value_function}
A state is the set $S\subseteq V$ of already occupied facilities (vertices). Let $V(S)$ denote the
\emph{expected remaining total assignment cost} from state $S$ until all facilities are occupied, under the
greedy assignment rule described in Section~\ref{sec:model_rule}. In particular, the quantity of interest is
$V(\emptyset)$, the expected total cost starting from an empty polygon.

For a fixed state $S$, edge $e\in E$, and position $t\in[0,1]$, define the immediate assignment cost
\[
C_{S,e}(t) \;=\; \min_{v\in V\setminus S}\dist(e,t,v),
\]
where $\dist$ is the edge-walk distance from \eqref{eq:dist_def}. When multiple free vertices attain the minimum,
the greedy rule chooses uniformly among the minimizers.; Llet
$P_{v\mid S,e}(t)$ be the probability that the next occupied vertex is $v$ given $(S,e,t)$.
By definition, $P_{v\mid S,e}(t)=0$ for $v\in S$, and $\sum_{v\in V\setminus S}P_{v\mid S,e}(t)=1$.

\subsection{Main recurrence}\label{sec:main_recurrence}
The expected-cost recurrence follows from conditioning on the random arrival edge and location.

\begin{theorem}[Expected-cost recurrence]\label{thm:recurrence}
For any state $S\subseteq V$ with $|S|<n$, the value function satisfies
\begin{equation}\label{eq:Vrec}
V(S)
=
\frac{1}{n}\sum_{e\in E}\int_{0}^{1}
\left(
C_{S,e}(t) \;+\;
\sum_{v\in V\setminus S} P_{v\mid S,e}(t)\, V(S\cup\{v\})
\right)\,dt,
\end{equation}
with boundary condition $V(V)=0$.
\end{theorem}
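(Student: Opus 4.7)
The plan is to treat \eqref{eq:Vrec} as a one-step Bellman equation: starting from state $S$, condition on the first subsequent arrival $(e,t)$ together with the greedy/tie-break decision it triggers, separate the cost paid immediately from the cost accumulated later, and invoke the Markov property of the occupancy process to identify the latter with $V(S\cup\{v\})$ for the (random) newly occupied vertex $v$. The boundary case $V(V)=0$ is immediate because the process has already terminated once every facility is occupied, so no further cost is accrued.

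First I would verify that $S$ is a sufficient statistic for the remaining cost. Because successive arrivals are i.i.d.\ (each edge uniform on $E$, each position uniform on $[0,1]$, independent of the past) and the greedy rule with uniform tie-breaking depends only on the current state and the current arrival, the sequence of occupancy sets is a Markov chain on $2^V$, so $V(S)$ is well defined as a function of $S$ alone. Next I would condition on the first arrival: letting $E_1$ denote its edge (uniform on $E$), $T_1$ its position (uniform on $[0,1]$, independent of $E_1$), and $W$ the total remaining cost accumulated \emph{after} this first arrival, the tower property gives
\[
V(S)=\frac{1}{n}\sum_{e\in E}\int_{0}^{1}\E\!\left[C_{S,e}(T_1)+W \,\middle|\, E_1=e,\,T_1=t\right]dt.
\]
On the event $\{E_1=e,T_1=t\}$ the greedy rule pays exactly $C_{S,e}(t)$ by definition of $C_{S,e}$, and the newly occupied vertex is $v$ with conditional probability $P_{v\mid S,e}(t)$. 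Conditioning further on this choice and using the Markov property to replace the conditional expectation of $W$ given the post-arrival state $S\cup\{v\}$ by $V(S\cup\{v\})$ yields the claimed identity.

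The main technical point to check is measurability and integrability of the inner integrand. For each fixed $S$ and $e$, the map $t\mapsto C_{S,e}(t)$ is a minimum of finitely many affine functions of $t$, hence continuous and piecewise linear; and $t\mapsto P_{v\mid S,e}(t)$ is piecewise constant, since the tie set between any two edge-walk distance functions is a finite union of points or intervals on each of which the argmin (and hence the uniform tie-break weights) are constant. Both are bounded by diameter-type quantities depending only on $n$, so Fubini and the tower property apply without issue. I do not anticipate a genuine obstacle beyond bookkeeping; the only mild subtlety is correctly handling positive-measure tie sets that can appear in symmetric configurations, which is precisely why $P_{v\mid S,e}(t)$ must be integrated against $dt$ rather than treated as an almost-sure indicator.
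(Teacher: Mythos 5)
Your proof is correct and follows essentially the same route as the paper: condition on the first arrival $(e,t)$ via the law of total expectation, pay the immediate cost $C_{S,e}(t)$, and average the future cost $V(S\cup\{v\})$ over the tie-breaking distribution $P_{v\mid S,e}(t)$, with $V(V)=0$ by termination. Your added remarks on the Markov property of the occupancy chain and on measurability (including possible positive-measure tie sets) are sound elaborations of steps the paper leaves implicit, not a different argument.
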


\begin{proof}
Let $(E,T)$ be the random arrival, where $E$ is uniform over the $n$ edges and $T\sim\mathrm{Unif}[0,1]$ is
uniform along the chosen edge. Conditioning on $(E,T)$ and applying the law of total expectation yields
\begin{equation}\label{eq:totalexp}
\begin{split}
V(S) &= \mathbb{E}[\text{total remaining cost} \mid S] \\
     &= \frac{1}{n} \sum_{e \in E} \int_0^1 \mathbb{E}[\text{total remaining cost} \mid S, E=e, T=t] \, dt.
\end{split}
\end{equation}
Fix $S,e,t$. Under the greedy rule, the first arrival from this state incurs the deterministic
\emph{immediate cost} $C_{S,e}(t)$, and the process transitions to a new state
$S^{+}=S\cup\{v\}$ where $v$ is the (random) chosen minimizer vertex. Therefore,
\[
\E[\text{total remaining cost}\mid S,e,t]
=
C_{S,e}(t) \;+\; \E\big[V(S^{+}) \mid S,e,t\big].
\]
Taking expectation over tie-breaking outcomes,
\begin{align*}
\mathbb{E}\big[V(S^{+}) \mid S, e, t\big] &= \sum_{v \in V \setminus S} \Pr(S^{+} = S \cup \{v\} \mid S, e, t) \, V(S \cup \{v\}) \\
&= \sum_{v \in V \setminus S} P_{v \mid S, e}(t) \, V(S \cup \{v\}).
\end{align*}
Substituting into \eqref{eq:totalexp} gives \eqref{eq:Vrec}. Finally, if $S=V$ then no facility remains free
and the process has terminated, so $V(V)=0$.
\end{proof}
\qed

\subsection{Orientation-sensitive special cases}\label{sec:special_cases}
The recurrence in \eqref{eq:Vrec} is sensitive to the orientation convention for $t$
(Section~\ref{sec:model_arrivals}). In particular, when an endpoint of the arrival edge is occupied,
the immediate cost depends on which endpoint remains free.

\paragraph{Exactly one endpoint free.}
Let $e=(i,j)$.
If $i\notin S$ and $j\in S$ (only the start endpoint is free), then the greedy rule must choose $i$ for any
$t$, and
\[
C_{S,e}(t)=\dist(e,t,i)=t,\qquad
P_{i\mid S,e}(t)=1.
\]
If instead $i\in S$ and $j\notin S$ (only the end endpoint is free), then the greedy rule must choose $j$, and
\[
C_{S,e}(t)=\dist(e,t,j)=1-t,\qquad
P_{j\mid S,e}(t)=1.
\]
These two cases are equivalent up to swapping the orientation of $e$, but stating them explicitly avoids
ambiguity in implementations and figures.

\section{Structural Insight: Symmetry Reduction}\label{sec:symmetry}
A direct dynamic program over occupancy states has $2^n$ subsets and is quickly dominated by state-space size.
On a regular polygon, however, many states are equivalent up to symmetry: rotating or reflecting the polygon
does not change the distribution of future arrivals, nor the distances used by the greedy rule.
This section formalizes that invariance and shows how it yields an effective reduction in the number of distinct
DP states that must be evaluated.

\subsection{Dihedral invariance}\label{sec:dihedral}
Let $D_n$ denote the dihedral group of the regular $n$-gon, consisting of the $n$ rotations and $n$ reflections.
Each symmetry $g\in D_n$ acts as a relabeling of vertices $V=\{0,\dots,n-1\}$, and thus acts naturally on states
$S\subseteq V$ via $g(S)=\{g(v): v\in S\}$.

\begin{lemma}[Dihedral invariance of the value function]\label{lem:dihedral_invariance}
For every state $S\subseteq V$ and every dihedral symmetry $g\in D_n$,
\[
V(S) \;=\; V\big(g(S)\big).
\]
\end{lemma}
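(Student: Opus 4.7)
The plan is to prove dihedral invariance by strong induction on $k=n-|S|$, the number of remaining free vertices, using the recurrence of Theorem~\ref{thm:recurrence} as the inductive driver. The base case $k=0$, i.e.\ $S=V$, is immediate: any $g\in D_n$ permutes $V$ as a set, so $g(V)=V$ and both sides equal $0$ by the boundary condition.

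For the inductive step, I would first collect the equivariance properties of the data entering the recurrence. Since $D_n$ is the symmetry group of the cycle $(V,E)$, the action of $g$ is a bijection of $E$ (as an unordered edge set), and the cycle distance is preserved: $\delta(g(u),g(v))=\delta(u,v)$. Writing $\tau^{e}_g(t)=t$ when $g$ maps the oriented edge $e=(i,j)$ to $(g(i),g(j))$ in the canonical orientation of Section~\ref{sec:model_arrivals}, and $\tau^{e}_g(t)=1-t$ when $g$ reverses that orientation, the distance formula in \eqref{eq:dist_def} gives
\[
\dist(e,t,v)=\dist\bigl(g(e),\tau^{e}_g(t),g(v)\bigr)
\quad\text{for all }v\in V.
\]
From this the argmin over free vertices transforms equivariantly, which in turn yields $C_{S,e}(t)=C_{g(S),g(e)}(\tau^{e}_g(t))$, and because tie-breaking is uniform over a set whose cardinality is preserved, $P_{v\mid S,e}(t)=P_{g(v)\mid g(S),g(e)}(\tau^{e}_g(t))$.

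Next I would substitute these identities into the right-hand side of \eqref{eq:Vrec} and perform the change of variables $e\mapsto g(e)$, a bijection on $E$ that preserves the uniform probability $1/n$, together with $t\mapsto \tau^{e}_g(t)$, which is either the identity or $t\mapsto 1-t$ and in either case a measure-preserving bijection of $[0,1]$. After re-indexing the inner sum by $u=g(v)$ and invoking the inductive hypothesis on the $(|S|+1)$-vertex states $g(S)\cup\{u\}$ (equivalently, applying invariance to $S\cup\{v\}$ via $g$), the expression collapses to exactly the recurrence for $V(g(S))$, completing the induction.

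The only genuinely subtle step, and the place I expect the most care to be required, is the bookkeeping of edge orientation under reflections: an edge written in $E$ as $(i,(i+1)\bmod n)$ may be mapped by $g$ to a pair whose canonical representative in $E$ is oriented the opposite way, which forces the $t\mapsto 1-t$ substitution and interacts with the orientation-sensitive special cases of Section~\ref{sec:special_cases}. This is resolved by noting that $[0,1]$ with Lebesgue measure is invariant under $t\mapsto 1-t$ and that both $\dist(e,t,v)$ and the ``exactly one endpoint free'' formulas are symmetric under the simultaneous swap $(i,j)\leftrightarrow(j,i)$, $t\leftrightarrow 1-t$, so every integral transforms correctly and the induction goes through uniformly for rotations and reflections alike.
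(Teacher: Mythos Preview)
Your induction via the recurrence is correct, but the paper proves the lemma by a different route: a direct process-level coupling. It observes that any $g\in D_n$ is a cycle automorphism, hence preserves both the arrival distribution (uniform edge, uniform point on that edge) and the edge-walk distance, so the greedy rule commutes with $g$ and the entire assignment process started from $S$ has the same law as the one started from $g(S)$; equality of expected remaining cost follows without ever touching the recurrence. Your approach has the merit of making the orientation bookkeeping fully explicit---the $t\mapsto 1-t$ substitution under reflections is precisely the point the paper's displayed identity $\dist(e,t,v)=\dist(g(e),t,g(v))$ elides, since for a reflection the canonical representative of $g(e)$ in $E$ may carry the opposite orientation. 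Conversely, the paper's coupling argument is shorter, is logically independent of Theorem~\ref{thm:recurrence}, and would extend immediately to any automorphism-invariant arrival law without reopening a recurrence.
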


\begin{proof}
Consider the online process starting from state $S$. Apply $g$ as a relabeling of the polygon: vertices, edges,
and states are mapped to their images under $g$. Because the arrival model chooses an edge uniformly at random
and a position uniformly along that edge, the distribution of arrivals is invariant under dihedral symmetries:
the image of a uniformly random edge is again a uniformly random edge, and positions $t\in[0,1]$ remain uniform.

Moreover, the edge-walk distance \eqref{eq:dist_def} is preserved under this relabeling: for any edge
$e=(i,j)$, position $t$, and vertex $v$,
\[
\dist(e,t,v) \;=\; \dist\big(g(e),t,g(v)\big),
\]
since $g$ preserves adjacency on the cycle and thus preserves the cycle distance $\delta(\cdot,\cdot)$.
Therefore, the greedy decision rule (minimization of distance with uniform tie-breaking among minimizers)
commutes with $g$: the selected vertex in the transformed instance is exactly the image under $g$ of the selected
vertex in the original instance.

Consequently, the entire stochastic process of assignments and costs starting from $S$ has the same distribution
as the transformed process starting from $g(S)$, implying equality of expected total remaining costs:
$V(S)=V(g(S))$.
\end{proof}
\qed

\subsection{Canonicalization and symmetry-reduced memoization}\label{sec:canon}
Lemma~\ref{lem:dihedral_invariance} implies that the value function depends only on the orbit of $S$ under $D_n$,
not on the specific labeling of occupied vertices. We exploit this by storing DP values only for a canonical
representative of each orbit.

\paragraph{Bitmask representation.}
Represent a state $S$ as an $n$-bit mask $m(S)\in\{0,1\}^n$, where bit $i$ is $1$ iff $i\in S$.
Each symmetry $g\in D_n$ induces a permutation of bit positions, producing an image mask $m(g(S))$.

\paragraph{Canonical representative.}
Define
\[
\canon(S) \;=\; \arg\min_{g\in D_n} \; m\big(g(S)\big),
\]
where the minimum is taken in the integer (or lexicographic) order of bitmasks. Equivalently, interpret
$m(\cdot)$ as an integer and take the minimum value.
Thus, $\canon(S)$ is the orbit-minimal state under rotations and reflections.

\paragraph{Using canon($\cdot$) in DP.}
Whenever the DP needs $V(S)$, we replace $S$ by its canonical representative:
\[
V(S) \;=\; V\big(\canon(S)\big),
\]
and memoize only $V(\canon(S))$. This ensures that symmetric states are computed once and reused.
In practice, for moderate $n$ the number of distinct orbits is often close to a $2n$ factor smaller than $2^n$,
since a typical subset has approximately $2n$ distinct dihedral images (fewer only for highly symmetric subsets).

\paragraph{Practical impact.}
This symmetry reduction is crucial for making the recurrence in Theorem~\ref{thm:recurrence} computationally
usable. While worst-case complexity remains exponential, canonicalization reduces both runtime and memory by
eliminating redundant evaluations across symmetric occupancy patterns.

\section{Algorithms}\label{sec:algorithms}
The recurrence of Theorem~\ref{thm:recurrence} is exact but involves (i) exponentially many occupancy states and
(ii) a continuous integral over arrival positions. To obtain reliable numerical results across polygon sizes,
we use two complementary regimes: a symmetry-reduced dynamic program with deterministic integration for small
$n$, and direct Monte Carlo simulation of the full online process for larger $n$.

\subsection{Small $n$: symmetry-reduced DP with deterministic integration}\label{sec:alg_small}
For small to moderate $n$, we compute $V(\emptyset)$ by dynamic programming over states, exploiting the symmetry
reduction of Section~\ref{sec:canon}. The remaining challenge is evaluating, for each state $S$ and each edge $e$,
the integral over $t\in[0,1]$ in \eqref{eq:Vrec}.

\paragraph{Piecewise structure in $t$.}
Fix a state $S$ and an edge $e=(i,j)$. For each free vertex $v\in V\setminus S$, the function
\[
t \mapsto \dist(e,t,v)=\min\big(t+\delta(i,v),\ (1-t)+\delta(j,v)\big)
\]
is piecewise-linear in $t$ (a minimum of two affine functions). The immediate cost
$C_{S,e}(t)=\min_{v\in V\setminus S}\dist(e,t,v)$ is therefore also piecewise-linear, with breakpoints at values of
$t$ where two candidate distances coincide or where the minimizing path to a fixed $v$ switches between the
$i$-route and $j$-route. Consequently, the integrand in \eqref{eq:Vrec} is piecewise-linear except at finitely many
breakpoints, and tie points have measure zero.

\paragraph{Deterministic integration strategy.}
For each pair $(S,e)$ we:
\begin{enumerate}
\item compute a set of breakpoints $\mathcal{B}_{S,e}\subset[0,1]$ containing all $t$ where the identity of the
nearest free facility can change (including path-switch points for each free vertex and pairwise equality points
between candidate vertices),;
\item sort $\mathcal{B}_{S,e}$ and split $[0,1]$ into intervals on which the nearest-facility set is constant,;
\item on each interval, evaluate the integral using a fixed Gaussian quadrature rule (e.g., 4--8 points) or,
since the integrand is affine on that interval, integrate it exactly.
\end{enumerate}
On an interval where the nearest free vertex is unique, we have $P_{v\mid S,e}(t)\in\{0,1\}$ and the
future-cost term reduces to $V(S\cup\{v\})$. When ties occur only at isolated $t$ values, they do not affect the
integral.; Nnevertheless, for numerical stability we treat near-ties with a small tolerance.

\paragraph{DP ordering and memoization.}
We compute $V(S)$ in reverse order by $|S|$ (from $|S|=n$ down to $0$), storing values only for canonical states
$\canon(S)$. When evaluating the future term $V(S\cup\{v\})$, we immediately canonicalize the successor state.
This ensures that every distinct orbit under dihedral symmetry is computed once.

\begin{algorithm}[t]
\caption{Symmetry-reduced DP with deterministic integration (small $n$)}
\label{alg:small_dp}
\begin{algorithmic}[1]
\State \textbf{Input:} polygon size $n$
\State \textbf{Output:} $V(\emptyset)$
\State Initialize memo table $M[\cdot]$ for canonical states; set $M[\canon(V)]\gets 0$
\For{$k \gets n-1$ \textbf{down to} $0$}
  \For{\textbf{each} canonical state $S$ with $|S|=k$}
    \State $acc \gets 0$
    \For{\textbf{each} edge $e\in E$}
      \State compute breakpoints $\mathcal{B}_{S,e}$ and partition $[0,1]$ into intervals
      \State deterministically integrate the recurrence integrand over $t\in[0,1]$
      \State $acc \gets acc + \text{(edge integral)}$
    \EndFor
    \State $M[S] \gets acc/n$
  \EndFor
\EndFor
\State \Return $M[\canon(\emptyset)]$
\end{algorithmic}
\end{algorithm}

\subsection{Large $n$: direct Monte Carlo simulation of the full online process}\label{sec:alg_large}
For larger $n$, state-space DP (even with symmetry reduction) becomes expensive. Instead we estimate the expected
total cost by simulating the online assignment process directly.

\paragraph{One Monte Carlo run.}
A single run generates $n$ arrivals. Each arrival samples an edge uniformly from $E$, samples $t\sim\mathrm{Unif}[0,1]$,
computes $\dist(e,t,v)$ for all currently free vertices $v$, assigns the arrival to the nearest free vertex with
uniform tie-breaking, and accumulates the realized distance. The output of a run is a random total cost $X$.

\paragraph{Estimator and confidence intervals.}
Repeat the run independently for $R$ trials to obtain $X_1,\dots,X_R$.
We estimate the expected total cost by the sample mean
\[
\widehat{\mu}=\frac{1}{R}\sum_{r=1}^R X_r,
\]
with sample standard deviation $\widehat{\sigma}$. A $95\%$ confidence interval is reported as
\[
\widehat{\mu} \pm 1.96\cdot \frac{\widehat{\sigma}}{\sqrt{R}},
\]
which is appropriate for large $R$ by the central limit theorem. We also report per-customer cost
$\widehat{\mu}/n$ with the corresponding scaled interval.

\subsection{Complexity summary}\label{sec:complexity}
\paragraph{Small-$n$ DP.}
In the worst case the DP enumerates $2^n$ states, but symmetry reduction stores only one representative per
dihedral orbit, yielding a practical reduction close to a factor $2n$ for typical states.
For each stored state, the recurrence sums over $n$ edges and integrates over $t\in[0,1]$.
With breakpoint partitioning, the integrand is evaluated on a moderate number of intervals.; Uusing fixed-point
quadrature (or exact linear integration) adds a manageable constant factor. Overall, the approach remains
exponential in $n$ but is reliable and high-precision for small to moderate polygons.

\paragraph{Large-$n$ simulation.}
A naive simulation computes distances to all free vertices at each of the $n$ arrivals, giving
$O(n^2)$ time per run and $O(n)$ space. Over $R$ runs, this yields $O(Rn^2)$ time.
With additional data structures exploiting the cycle geometry, one can reduce the per-arrival search cost,
but the naive implementation is often sufficient for moderate $n$ and large $R$.

\section{Worked Validation Example: the Square ($n=4$)}\label{sec:square}
We provide a compact closed-form computation for $n=4$ to validate the recurrence in
Theorem~\ref{thm:recurrence} and to sanity-check our implementations. Label vertices
$V=\{0,1,2,3\}$ clockwise and edges $E=\{(0,1),(1,2),(2,3),(3,0)\}$. Let $C_k$ be the
(immediate) distance paid by the $k$-th arrival,; the total expected cost is
$V(\emptyset)=\E[C_1+C_2+C_3+C_4]$.

\paragraph{First arrival.}
With all facilities free, on any chosen edge the nearest endpoint is at distance $\min(t,1-t)$, hence
\[
\E[C_1]=\int_0^1 \min(t,1-t)\,dt
=2\int_0^{1/2} t\,dt=\frac14.
\]

\paragraph{Second arrival.}
By symmetry assume vertex $0$ was occupied first, so the free set is $\{1,2,3\}$.
Averaging over the four edges:
on $(0,1)$ the nearest free endpoint is $1$ with cost $1-t$; on $(3,0)$ it is $3$ with cost $t$; and on
$(1,2)$ and $(2,3)$ both endpoints are free giving cost $\min(t,1-t)$. Therefore,
\[
\E[C_2]=\frac14\Big(\int_0^1 (1-t)\,dt+\int_0^1 \min(t,1-t)\,dt+\int_0^1 \min(t,1-t)\,dt+\int_0^1 t\,dt\Big)
=\frac38.
\]

\paragraph{Third arrival.}
Given the first occupied vertex is $0$, the second occupied vertex is $1,2,3$ with probabilities
$3/8,\,1/4,\,3/8$, respectively, obtained by measuring which free vertex is closest over $t\in[0,1]$ on each edge.
Conditioning on the resulting two-occupied patterns yields
$\E[C_3]=19/32$.

\paragraph{Fourth arrival.}
With three vertices occupied, only one facility remains free (say vertex $0$ by symmetry). The edge-averaged
distance from a uniform arrival to vertex $0$ equals $1$, hence $\E[C_4]=1$.

\paragraph{Total.}
Summing,
\[
V(\emptyset)=\E[C_1]+\E[C_2]+\E[C_3]+\E[C_4]
=\frac14+\frac38+\frac{19}{32}+1=\frac{71}{32}=2.21875.
\]
This closed-form value matches the output of our numerical methods, validating the recurrence and the distance
convention.

\section{Experiments}\label{sec:exp}
\subsection{Setup}\label{sec:exp_setup}
We implemented both regimes from Section~\ref{sec:algorithms} in \texttt{C++} (double precision).
For symmetry reduction we canonicalize each occupancy bitmask under the $2n$ dihedral actions
(rotations and reflections) as described in Section~\ref{sec:canon}.
For the \emph{small-$n$} regime, the recurrence integrals are evaluated by deterministic numerical integration on
intervals where the nearest free facility is constant (with a conservative floating-point tolerance for
near-ties).
For the \emph{large-$n$} regime, we run $R$ independent simulations of the full online process.; Iin each run we
generate exactly $n$ arrivals and apply greedy nearest-free assignment with uniform tie-breaking.

For Monte Carlo reporting we use the estimator $\widehat{\mu}=\frac{1}{R}\sum_{r=1}^R X_r$, where $X_r$ is the
total cost in run $r$. We report a $95\%$ confidence interval (CI) as
$\widehat{\mu}\pm 1.96\,\widehat{\sigma}/\sqrt{R}$, where $\widehat{\sigma}$ is the sample standard deviation of
$\{X_r\}_{r=1}^R$. Randomness is generated by a standard pseudorandom generator.; Ffor reproducibility we recommend
fixing a base seed and logging it alongside $(n,R)$.

\subsection{Results}\label{sec:exp_results}
Table~\ref{tab:small_n} summarizes expected total cost estimates for small polygons (using $R=10{,}000$ Monte Carlo
runs as a lightweight check). For $n=4$, the estimate matches the closed-form value
$71/32$ from Section~\ref{sec:square}, validating the distance convention and assignment logic.
Table~\ref{tab:large_n} provides a compact reporting format for larger $n$ using direct simulation with
confidence intervals.

\vspace{-0.2em}
\begin{table}[t]
\centering
\caption{Small-$n$ expected total cost (Monte Carlo, $R=10{,}000$ runs). For $n=4$, the exact value is $71/32$.}
\label{tab:small_n}
\vspace{-0.4em}
\begin{tabular}{r l r r}
\toprule
$n$ & Polygon & $\widehat{\mu}$ (total) & $\widehat{\mu}/n$ \\
\midrule
3 & Triangle & 1.414 & 0.471 \\
4 & Square   & 2.222 & 0.556 \\
5 & Pentagon & 3.138 & 0.628 \\
6 & Hexagon  & 4.159 & 0.693 \\
7 & Heptagon & 5.284 & 0.755 \\
8 & Octagon  & 6.493 & 0.812 \\
9 & Nonagon  & 7.783 & 0.865 \\
\bottomrule
\end{tabular}
\end{table}

\begin{table}[t]
\centering
\caption{Direct Simulation results on Large-$n$ values.}
\label{tab:large_n}

\begin{tabular}{r r r r}
\toprule
$n$ & $R$ & $\widehat{\mu}$ (total) & $95\%$ CI for $\widehat{\mu}$ \\
\midrule
20  & 20{,}000 & 22.72 & $[22.67,\;22.77]$ \\
50  & 20{,}000 & 73.20 & $[73.08,\;73.32]$ \\
100 & 20{,}000 & 171.22 & $[171.02,\;171.42]$ \\
\bottomrule
\end{tabular}
\end{table}

\subsection{Trend discussion}\label{sec:exp_trends}
Across all tested $n$, the expected \emph{total} assignment cost increases with $n$.
More importantly, the \emph{per-customer} cost $\widehat{\mu}/n$ also increases gradually as $n$ grows, reflecting that late arrivals are increasingly forced to travel to more distant remaining free vertices.
We emphasize that the observed growth is an empirical trend under the uniform-arrival model. We do not claim a strict proportionality law without separate proof.

\section{Conclusion and Future Work}\label{sec:conc}
We analyzed greedy online facility assignment on regular polygons under an explicit random-arrival model.
Our main theorem gives an exact finite-horizon integral recurrence for the expected remaining cost $V(S)$ from any
occupancy state, and we showed that $V(S)$ is invariant under dihedral symmetries, enabling state canonicalization
and symmetry-reduced dynamic programming.
To compute values reliably within a 12-page LNCS presentation, we split computation by regime: deterministic
integration of the recurrence for small $n$, and direct Monte Carlo simulation with confidence intervals for
larger $n$, validated by a closed-form square computation ($n=4$, $71/32$).

Promising extensions include (i) non-regular polygons or general planar graphs, where symmetry breaks,
(ii) non-uniform or correlated arrival distributions, (iii) capacities greater than one per facility, and
(iv) bridging expected-cost analysis with worst-case competitive guarantees in polygonal metrics~\cite{ahmed2020online,malik2025online}.

% \section{Acknowledgements}
% The authors thank the anonymous reviewers in advance for their valuable feedback. Generative AI tools (ChatGPT and Gemini) were used only for formatting and (Grammarly) for grammatical support, not for technical contributions. The authors declare no conflict of interest.

\bibliographystyle{splncs04}
\bibliography{samplepaper}

\clearpage
\appendix
\section{Appendix}\label{app:proofs}
\subsection{Monte Carlo Dynamic Programming Source Code}
\begin{lstlisting}
#include <iostream>
#include <vector>
#include <random>
#include <chrono>
#include <iomanip>
#include <cmath>
#include <cstdint>
using namespace std;

// ---------- RNG ----------
static uint64_t SEED = 123456789ULL;
static mt19937_64 rng(SEED);

static inline double urand01() {
    static uniform_real_distribution<double> dist(0.0, 1.0);
    return dist(rng);
}
static inline int irand(int lo, int hi) { // inclusive
    uniform_int_distribution<int> dist(lo, hi);
    return dist(rng);
}

// ---------- Geometry ----------
static inline int delta_cycle(int a, int b, int n) {
    int d = abs(a - b);
    return min(d, n - d);
}

static inline double edgewalk_dist(int n, int e, double t, int v) {
    int i = e;
    int j = (e + 1) % n;       // edge is (i, j)
    double via_i = t + delta_cycle(i, v, n);         // t is distance to i
    double via_j = (1.0 - t) + delta_cycle(j, v, n); // 1-t is distance to j
    return min(via_i, via_j);
}

int main() {
    int n = 9;
    int NUM_SAMPLES = 5000;
    const double EPS = 1e-12;

    cout << fixed << setprecision(6);
    cout << "n=" << n << ", NUM_SAMPLES=" << NUM_SAMPLES << ", SEED=" << SEED << "\n";

    if (n <= 0 || n > 25) {
        cerr << "Warning: state-space DP uses 2^n states; pick small n.\n";
    }

    uint64_t N = 1ULL << n;
    vector<double> V(N, 0.0);

    // bucket masks by popcount
    vector<vector<uint64_t>> by_pc(n + 1);
    by_pc.reserve(n + 1);
    for (uint64_t m = 0; m < N; m++) {
        int pc = __builtin_popcountll(m);
        by_pc[pc].push_back(m);
    }

    // Backward induction: V[full]=0 already
    for (int pc = n - 1; pc >= 0; pc--) {
        for (uint64_t mask : by_pc[pc]) {

            // build free vertices list
            vector<int> freev;
            freev.reserve(n - pc);
            for (int v = 0; v < n; v++) {
                if ((mask & (1ULL << v)) == 0) freev.push_back(v);
            }

            double acc = 0.0;

            for (int s = 0; s < NUM_SAMPLES; s++) {
                int e = irand(0, n - 1);
                double t = urand01();

                double best = 1e100;
                vector<int> ties;
                ties.reserve(4);

                for (int fv : freev) {
                    double d = edgewalk_dist(n, e, t, fv);
                    if (d + EPS < best) {
                        best = d;
                        ties.clear();
                        ties.push_back(fv);
                    } else if (fabs(d - best) <= EPS) {
                        ties.push_back(fv);
                    }
                }

                // uniform tie-break among minimizers
                int chosen = ties[irand(0, (int)ties.size() - 1)];
                uint64_t next_mask = mask | (1ULL << chosen);

                acc += best + V[next_mask];
            }

            V[mask] = acc / (double)NUM_SAMPLES;
        }
    }

    cout << "Estimated V(empty) = " << V[0] << "\n";
    return 0;
}

\end{lstlisting}

\end{document}